\newtheorem{theorem}{Theorem}[section]
\newtheorem{lemma}[theorem]{Lemma}
\newtheorem{corollary}[theorem]{Corollary}
\title{\vspace{-2cm}On the impossibility of discovering \\ a formula for primes using AI}
\author{Alexander Kolpakov \\ \href{mailto:me@somewhere.com}{kolpakov.alexander@gmail.com} 
   \and Aidan Rocke \\ \href{mailto:me@somewhere.com}{rockeaidan@gmail.com} }
\date{\today}
\begin{document}

\maketitle

\begin{abstract}
The present work explores the theoretical limits of Machine Learning (ML) within the framework of Kolmogorov's theory of Algorithmic Probability, which clarifies the notion of entropy as Expected Kolmogorov Complexity and formalizes other fundamental concepts such as Occam's razor via Levin's Universal Distribution. As a fundamental application, we develop Maximum Entropy methods that allow us to derive the Erd\H{o}s--Kac Law and Hardy--Ramanujan theorem in Probabilistic Number Theory, and establish the impossibility of discovering a formula for primes using Machine Learning via the Prime Coding Theorem.  
\end{abstract}

\newpage 

\epigraph{God made the integers; all else is the work of man.}{Leopold Kronecker}

\newpage 

\tableofcontents

\newpage 

\section{Compressing human concepts}

You meet a French mathematician at the Reykjavik airport with a million things on his mind but 
at any moment he is only thinking of one particular topic. Assuming that this list of concepts is known a priori, what is the minimum number of binary questions, asked in sequential order, that you would need to determine what he is thinking about? In the worst case, 

\begin{equation}
\log_2 (10^6) \approx 20  
\end{equation}

So we might as well play a game of $20$ questions. Moreover, the popularity of this game suggests that any human concept may be described using at most $20$ bits of information. If we may solve this particular inductive problem, might it be possible to solve the general problem of scientific induction?

\section{Kolmogorov's theory of Algorithmic Probability}

Using Kolmogorov's theory of Algorithmic Probability, we may apply Occam's razor to any problem of scientific induction including the 
sequential game of $20$ questions. However, it is easy to forget that this requires overcoming a seemingly insurmountable scientific obstacle which dates back to von Neumann: 

\begin{quote}
``My greatest concern was what to call it. I thought of calling it 'information,' but the word was overly used, so I decided to call it 'uncertainty.' When I discussed it with John von Neumann, he had a better idea. Von Neumann told me, 'You should call it entropy, for two reasons. In the first place your uncertainty function has been used in statistical mechanics under that name, so it already has a name. In the second place, and more important, no one really knows what entropy really is, so in a debate you will always have the advantage.'' -- Claude Shannon	
\end{quote}

This was accomplished through an ingenious combination of Shannon's Theory of Communication with Alan Turing's Theory of Computation. What emerged is the most powerful generalisation of Shannon's theory for algorithmics and coding theory, so Kolmogorov Complexity and Shannon Entropy share the same units, and Kolmogorov Complexity elucidates the Shannon Entropy of a random variable as its Expected Description Length. Furthermore, assuming that the Physical Church-Turing thesis is true, Kolmogorov's theory of Algorithmic Probability formalizes Occam's razor as it is applied in the natural sciences. 

In the case of our game, we may formulate Occam's razor using the four fundamental theorems of Algorithmic Probability before approximating Kolmogorov Complexity, which is limit--computable, 
using Huffman Coding in order to solve the game of $20$ questions. 

The implicit assumption here is that the second player is able to encode the knowledge of the first player because both players share a similar education and culture. 

\section{Fundamental Theorems of Algorithmic Probability}

These are known results \cite{li-vitanyi}, however we provide all proofs for completeness in the Appendix. 

\subsection{Kolmogorov's Invariance Theorem}

Let $U$ be a Turing--complete language that is used to simulate a universal Turing machine. Let $p$ be an input of $U$ that produces the binary string $x \in \{0,1\}^*$. Then the \textit{Kolmogorov Complexity} (or \textit{Minimal Description Length}) of $x$ is defined as

\begin{equation}
    K_U(x) = \min_{p} \{|p|: U\circ p = x\}
\end{equation}

where $U \circ p$ denotes the output of $U$ on input $p$.

Kolmogorov's Invariance Theorem states that the above definition is asymptotically invariant to the choice of $U$. Namely, any other Turing--complete language (or, equivalently, another universal Turing machine) $U'$ satisfies

\begin{equation}
\forall x \in \{0,1\}^* : \lvert K_U(x)-K_{U'}(x) \rvert \leq O(1)
\end{equation}

That is,

\begin{equation}
\forall x \in \{0,1\}^* : - c(U, U') \leq K_U(x) - K_{U'}(x) \leq c(U, U')
\end{equation}
for some positive constant $c(U, U')$ that depends only on $U$ and $U'$.  

\textbf{Interpretation:}

The minimal description $p$ such that $U \circ p = x$ serves as a natural representation of the string $x$ relative to the Turing--complete language $U$. While it may be demonstrated that $K_U(x)$ and therefore $p$, is not computable, the fourth fundamental theorem asserts that the Expected Kolmogorov Complexity is asymptotically equal to the Shannon entropy of a random variable. Hence, Kolmogorov Complexity is computable \textit{on average}.

\subsection{Levin's Universal Distribution} 

The \textit{Algorithmic Probability} of a binary string $x$ may be defined as the probability of $x$ being generated by $U$ on random input $p$, where $p$ is a binary string generated by fair coin flips:

\begin{equation}
    P(x) = \sum_{p \,:\, U\circ p = x} 2^{-|p|}
\end{equation}

However, this measure is not guaranteed to converge: we can choose one such input $p$ and use it as a prefix for some $p'$ that is about $\log_2 k$ bits longer than $p$ and such that $U$ produces the same binary string: $U\circ p' = x$. Then we find that: 

\begin{equation}
    P(x) \geq 2^{-|p|}\, \sum_k \frac{1}{k}
\end{equation}

for $k$ from any subset of integers. Thus, we can't guarantee that $P(x)$ converges.

Levin's idea effectively formalizes Occam's razor: we need to consider prefix--free Turing--complete languages only. Such languages are easy to imagine: if we agree that all documents end with an instruction that cannot appear anywhere else, then we have a prefix--free language. 

Given that any prefix--free code is uniquely-decodable, it satisfies the Kraft-McMillan inequality. Thus, we obtain Levin's Universal Distribution: 

\begin{equation}
2^{-K_U(x)} \leq P(x) = \sum_{p \,:\, U\circ p = x} 2^{-|p|} \leq 1
\end{equation}

where from hereon we consider $U$ to be prefix--free, and $K_U$ now corresponds to \textit{prefix--free Kolmogorov Complexity}. 

\subsection{Levin's Coding Theorem}

In the setting of prefix--free Kolmogorov complexity, Levin's Coding theorem states that 

\begin{equation}
-\log_2 P(x) = K_U(x) - O(1)
\end{equation}

Hence,

\begin{equation}
P(x) = \Theta\left( 2^{-K_U(x)} \right)
\end{equation}

\textbf{Interpretation:}

Relative to a prefix--free Turing--complete language $U$ (or, equivalently, a universal prefix--free Turing machine), the number of fair coin flips required to generate the shortest program that outputs $x$ is on the order of $\sim K_U(x)$. Thus, from a frequentist perspective, the entropy of the Universal 'a priori' Probability that we observe the event $x$ has a \textit{normal order} of:

\begin{equation}
 -\log_2 P(x) \sim K_U(x)
\end{equation}

Though we can't estimate Kolmogorov Complexity, we may use Lemma 7.2 to approximate the normal order of Kolmogorov Complexity. As an immediate consequence, we may estimate the normal order of Algorithmic Probability. 
Hence, we may reliably evaluate the Expected Kolmogorov Complexity of a random variable although Kolmogorov Complexity is not computable. 

It follows that Levin's Coding theorem allows us to formalize the notion of \textit{entropy of an event}. 

\newpage

\subsection{Maximum Entropy via Occam's razor}

Given a discrete random variable $X$ with computable probability distribution $P$, it holds that

\begin{equation}
\mathbb{E}[K_U(X)] = \sum_{x \in X} P(x) \cdot K_U(x) = H(X) + O(1)
\end{equation}

where $H(X)$ is the Shannon Entropy of $X$ in base $2$.

\textbf{Interpretation:}

The Shannon Entropy of a random variable in base $2$ is asymptotically equal to Expected Kolmogorov Complexity, which provides us with a precise answer to Von Neumann's original question. This theorem also allows us to assert that Kolmogorov Complexity is measurable on average where the expectation is calculated relative to Levin's Universal Distribution. 

Moreover, as an immediate consequence, we may deduce the Principle of Maximum Entropy. This follows first from the equivalence of Shannon's Source Coding theorem with the Asymptotic Equipartition Theorem, which we use to prove this fundamental theorem. Second, the Principle of Maximum Entropy that refers to constrained optimisation methods for estimating the source distribution from data is a specific application of Shannon's Source Coding theorem. Thus, machine learning systems that minimise the $KL$--Divergence are implicitly applying Occam's razor. 

But, what exactly do we mean by random variable? In a computable Universe the sample space of a random variable $X$ represents the state--space of a Turing Machine with unknown dynamics whose output sequence is computable. As the generated sequence is computable, it is finite--state incompressible in the worst-case i.e. a normal number. Hence, a random variable corresponds to a stochastic source that is finite--state random. 

This definition comes from a well--known correspondence between finite--state machines and normal numbers that establishes that a sequence is normal if and only if there is no finite--state machine that accepts it. 

\newpage 

\section{Preliminaries: the Game of 20 Questions}

\textbf{1.} The Game of $20$ Questions is played between Alice and Bob who are both assumed to be trustworthy and rational. Thus, Alice and Bob both perform sampling and inference using Levin's Universal Distribution $P$ over a shared alphabet $\mathcal{A}=\{a_i\}_{i=1}^n$. 

\textbf{2.} For the sake of convenience, we shall assume that $\mathcal{A}$ represents entries in the 
Britannica Encyclopedia and $\log_2 n \approx 20$. 

\textbf{3.} Bob selects an object $a \in \mathcal{A}$ and Alice determines the object by asking binary ``yes / no'' 
questions in a sequential manner, encoded using a prefix--free code, sampled from $P(\mathcal{A})$. 

\textbf{4.} Alice's goal is to minimize the expected number of questions which is equivalent to determine an $X \sim P(\mathcal{A})$ such that
\begin{equation}
\mathbb{E}[K_U(X)] \sim  H(X) \tag{*}
\end{equation}

\textbf{5.} In this setting, the Shannon Entropy may be understood as a measure of hidden information and we shall show that (*) has a solution using Huffman Coding. 

\subsection{An approximation to the Universal Distribution}

\textbf{Input:}

An alphabet $\mathcal{A} = \{a_i\}_{i=1}^n$ of symbols and a discrete probability distribution $P(\mathcal{A}) = \{p_i\}_{i=1}^n$ with $p_i$ equal to the frequency of $a_i$.

\textbf{Output:}

A prefix--free code $C(P)=\{c_i\}_{i=1}^n$ where $c_i \in \{0,1\}^*$ is the codeword for $a_i$. 

\textbf{Goal:}

Let the loss function be $\mathcal{L}(C(P))= \sum_{i=1}^n p_i \cdot \lvert c_i \rvert$, that is the weighted length of code $C$. We want to solve the minimization problem:

\begin{equation}
    \min_{C = C(P)}\,\mathcal{L}(C)
\end{equation}

If the Turing--complete language $U$ is prefix--free, we obtain:

\begin{equation}
H(X) + O(1) = \mathbb{E}[K_U(X)] = \sum_{i=1}^n p_i \cdot K_U(c_i) \leq \sum_{i=1}^n p_i \cdot \lvert c_i \rvert
\end{equation}
where $X \sim P(\mathcal{A})$. 

We may obtain an approximation to the Expected Kolmogorov Complexity, the lower-bound, via an entropy coding method such as Huffman Coding. This yields the desired prefix--free code $C(P)$. 

\subsection{Huffman Coding} 

\subsubsection{General description}

\textbf{1.} The technique works by creating a binary tree of nodes where leaf nodes represent the actual bytes in the input data. 

\textbf{2.} A node may be either a leaf node or an internal node. 

\textbf{3.} Initially, all nodes are leaf nodes each of which represents a symbol and its frequency. 

\textbf{4.} Internal nodes represent links to two child nodes and the sum of their frequencies. 

\textbf{5.} As a convention, bit '0' represents following the left child and bit '1' represents following the right child. 

\subsubsection{Simplest algorithm}

The simplest coding algorithm uses a priority queue where the node with the lowest probability is given the highest priority: 

\begin{enumerate}
 \item[\textbf{1.}] Create a leaf node for each symbol and add it to the priority queue. 
 
 \item[\textbf{2.}] While there is more than one node in the queue: 
 \begin{enumerate}
 	\item[\textbf{a.}] 	Remove the two nodes of highest priority (i.e. lowest probability) from the queue. 
	\item[\textbf{b.}] Create a new internal node with these two nodes as children and with probability equal to the sum of the two nodes' probabilities. 
	\item[\textbf{c.}] Add the new node to the queue. 
 \end{enumerate}

 \item[\textbf{3.}] The remaining node is the root node and the tree is complete. 
\end{enumerate}

Assuming that the nodes are already sorted, the time complexity of this algorithm is $O(n)$. 

\subsection{Discussion} 

Does the solution found via Huffman Coding agree with our intuitions? 

Assuming that internal nodes are given labels $v \in [1, 2 \cdot \lvert  \mathcal{A} \rvert]$ while leaf nodes are given labels $c_i \in \{0,1\}^*$ the information gained from any sequence of questions $S \subset [1, 2 \cdot \lvert  \mathcal{A} \rvert]$ may be determined from the entropy formula

\begin{equation}
H(S) = - \sum_{i \in S} p_i \cdot \log_2 p_i
\end{equation}

where the order of the internal nodes may be determined by sorting the vertices $i \in S$ with respect to their $\log$--probabilities $\{-\log_2 p_i \}_{i \in S}$. In principle, children of a parent node represent refinements of a particular concept, so the tree depth represents our depth of understanding. This degree of understanding may be measured in terms of the entropy $- p_i \cdot \log_2 p_i$. Hence, we have a satisfactory solution to the Game of $20$ questions. 

Zooming out, we may consider the ultimate impact of Kolmogorov's formalisation of scientific induction which Kolmogorov foretold \cite{kolmogorov1983combinatorial}: 

\begin{quote}
	Using his brain, as given by the Lord, a mathematician may not be interested in the combinatorial basis of his work. But the artificial intellect of machines must be created by man, and man has to plunge into the indispensable combinatorial mathematics. -- Kolmogorov (1983)
\end{quote}

In fact, Kolmogorov's theory of Algorithmic Probability may be viewed as a theory of machine epistemology. As for what may potentially limit the scope of machine epistemology relative to human epistemology, the \textit{big questions} section of \cite{hutter2000theory} may shed some light. 

\section{Acknowledgements}

We would like to thank Anders S\"odergren, Ioannis Kontoyiannis, Hector Zenil, Steve Brunton, Marcus Hutter, Cristian Calude, and Igor Rivin for constructive feedback in the preparation of this manuscript. 

\newpage

\section{Maximum Entropy methods for Probabilistic Number Theory}

\subsection{The Erd\H{o}s-Euclid theorem}

\textit{In essence, this proof demonstrates that the information content of finitely many primes is insufficient to generate all the integers. Originally due to Kontoyiannis \cite{kontoyiannis2008counting}.}

Let $\pi(N)$ be the number of primes that are less or equal to a given natural number $N$. Let us suppose that the set of primes $\mathbb{P}$ is finite so we have $\mathbb{P}=\{p_i\}_{i=1}^{\pi(N)}$ where $\pi(N)$ is constant for $N$ big enough. Then we can define a uniform integer--valued random variable $Z \sim U([1,N])$, such that 

\begin{equation}
Z = \big(\prod_{i=1}^{\pi(N)} p_i^{X_i}\big) \cdot Y^2 \tag{1}
\end{equation}

for some integer--valued random variables $1 \leq Y \leq \sqrt{N}$ and $X_i \in \{0,1\}$, such that $Z/Y^2$ is square--free. In particular, as we know that $Y \leq \sqrt{N}$, the upper bound for Shannon's Entropy from Jensen's inequality implies:

\begin{equation}
H(Y) \leq \log_2 \sqrt{N} = \frac{1}{2} \log_2 N \tag{2}
\end{equation}

Also, since $X_i$ is a binary variable, we have $H(X_i) \leq 1$. 

Using Kolmogorov's definition of Entropy, we obtain the asymptotic relation for the typical code length: 

\begin{equation}
\mathbb{E}[K_U(Z)] \sim H(Z) = \log_2 N	\tag{3}
\end{equation}

and we may deduce the following inequality: 

\begin{equation}
    H(Z) = H(Y, X_1, \ldots, X_{\pi(N)}) \leq H(Y) + \sum^{\pi(N)}_{i=1} H(X_i) \leq \frac{1}{2} \log_2 N + \pi(N) \tag{4}
\end{equation}

which implies:

\begin{equation}
\pi(N) \geq \frac{1}{2} \log_2 N \tag{5}
\end{equation}

This clearly contradicts the assumption that $\pi(N)$ is a constant for any natural $N$, and provides us with an effective lower bound on the prime counting function. 

\newpage

\subsection{Cheybshev's theorem via Algorithmic Probability}

\textit{An information-theoretic derivation of Chebyshev’s theorem (1852), an important precursor of the Prime Number Theorem, from the Maximum Entropy Principle. Another proof was given by Ioannis Kontoyiannis in \cite{kontoyiannis2008counting}.}

\textbf{Chebyshev's Theorem:}

We rediscover Chebyshev's theorem: 

\begin{equation}
H(X_{p_1},...,X_{p_{\pi(N)}}) = \sum_{p \leq N} \frac{1}{p} \cdot \log_2 p \sim \log_2 N \tag{1}
\end{equation}

which tells us that the expected information gained from observing a prime number in the interval $[1,N]$ is on the order of $\sim \log_2 N$. 

\textbf{Proof:}

For an integer sampled uniformly from the interval $[1,N]$ we may define its random prime factorization in terms of the random variables $X_p$: 

\begin{equation}
\forall Z \sim U([1,N]),\; Z = \prod_{p \leq N} p^{X_p} \tag{2}
\end{equation}

As we have no prior information about $Z$, it has the maximum entropy distribution among all possible distributions on $[1,N]$.  

While the  Kolmogorov Complexity of $Z$ is not computable, we may calculate its Expected Kolmogorov Complexity using Corollary~\ref{cor-integers} which tells us that almost all integers are incompressible: 

\begin{equation}
\mathbb{E}[K_U(Z)] \sim \mathbb{E}[\log_2 Z] = \frac{1}{N} \sum_{k=1}^N \log_2 k = \frac{\log_2(N!)}{N} \sim \log_2 N \tag{3}
\end{equation}

On the other hand, 

\begin{equation}
    \mathbb{E}[K_U(Z)] \sim \mathbb{E}[\log_2 Z] = \sum_{p \leq N} \mathbb{E}[X_p] \cdot \log_2 p \tag{4}
\end{equation}

By combining (3) and (4), we find:

\begin{equation}
\mathbb{E}[K_U(Z)] \sim \sum_{p \leq N} \mathbb{E}[X_p] \cdot \log_2 p \sim \log_2 N \tag{5}
\end{equation}

We also know that the geometric distribution maximizes the entropy of $X_p$ under the condition that $\mathbb{E}[X_p]$ is fixed \cite[Theorem 12.1.1]{cover-thomas}. Hence,

\begin{equation}
\mathbb{E}[X_p] = \sum_{k \geq 1} P(X_p \geq k) = \sum_{k \geq 1} \frac{1}{N} \left\lfloor \frac{N}{p^k} \right\rfloor \sim \frac{1}{p} \tag{6}
\end{equation}

\newpage

Thus, we rediscover Chebyshev's theorem: 

\begin{equation}
H\left(X_{p_1},...,X_{p_{\pi(N)}}\right) \sim \sum_{p \leq N} \frac{1}{p} \cdot \log_2 p \sim \log_2 N \tag{7}
\end{equation}

where this entropy formula is invariant to the choice of base of the logarithm.

\subsubsection{Corollary:}

As an important corollary, we may deduce that the base-e entropy of a typical prime number in the interval $[1,N]$ is on the order of $\sim \ln N$. We may begin by noting that the event that $Z$ is prime is given by the superposition of the event $X_p = 1$ of which there are $\pi(N)$ distinct possibilities and the null event $\sum_{p \leq N} X_p = 0$ which is unique. Hence, due to asymptotic independence we have: 

\begin{equation}
H(Z \in \mathbb{P}) = H(X_i = 1, \sum_{j \neq i} X_j = 0) \sim \sum_{i \leq N} H(X_i = 1) + H(\sum_{j \neq i} X_j = 0) 	\tag{8}
\end{equation}

where 

\begin{equation}
H(\sum_{j \neq i} X_j = 0) \sim -\sum_{p \leq N} \big(1-\frac{1}{p}\big) \cdot \ln \big(1-\frac{1}{p}\big) \sim \ln \ln N \tag{9}
\end{equation}

 and 
 
 \begin{equation}
 \sum_{p \leq N} H(X_p = 1) \sim \ln N \tag{10}
 \end{equation}
 
Therefore $H(Z \in \mathbb{P}) \sim \ln N$ which tells us that a typical prime number in the interval $[1,N]$ behaves as if any location in the interval $[1,N]$ is a priori equiprobable. 

\newpage 

\subsection{The Prime Coding Theorem}

\textit{In the following analysis, we consider an information-theoretic dual to the Prime Number Theorem via Chebyshev's theorem.}

We define the \textit{prime encoding} $\widehat{X_N} = \{\hat{x}_n\}_{n=1}^N$ where $\hat{x}_k = 1$ if $k \in \mathbb{P}$ and $\hat{x}_k = 0$ otherwise, as being the empirical realisation of a sequence of independent binary random variables  $X_N = \{x_n\}_{n=1}^N$. From an information-theoretic perspective, Chebyshev's theorem states that the average code length of a prime number in the interval $[1,N]$ is given by: 

\begin{equation}
H(X_1, ..., X_{p_\pi(N)}) \propto \log_2 N 	\tag{1}
\end{equation}

While Chebyshev's theorem is invariant to the choice of base of the logarithm, it tells us that the prime numbers are arranged uniformly in $[1,N]$ so from each $x_n \in X_N$ we may expect one bit of information:

\begin{equation}
	\frac{\mathbb{E}[K_U(X_N)]}{\pi(N)} \sim \frac{\log_2 2^N}{\pi(N)} \tag{2} 
\end{equation}

Furthermore, as each prime number contributes $H(X_{p_1},...,X_{p_{\pi(N)}})$ bits of information on average, in the limit of lossless compression Shannon's Source Coding theorem tells us that the expected code length for $X_N$ is given by: 

\begin{equation}
	\mathbb{E}[K_U(X_N)] \sim \pi(N) \cdot H(X_{p_1}, ..., X_{p_\pi(N)}) \tag{3}
\end{equation}

Finally, using the Prime Number Theorem $\frac{N}{\pi(N)} \sim \ln N$ we may deduce the Prime Coding theorem: 

\begin{equation}
\mathbb{E}[K_U(X_N)] \sim \pi(N) \cdot H(X_{p_1}, ..., X_{p_\pi(N)}) \sim N \tag{4}
\end{equation}

\begin{equation}
H(X_1, ..., X_{p_\pi(N)}) = \sum_{p \leq N} \frac{1}{p} \cdot \ln p \sim \ln N 	\tag{5}
\end{equation}

so the natural base is optimal for prime counting and we may conclude that the locations of all primes in $[1,N] \subset \mathbb{N}$ are statistically independent of each other. 

It follows that a machine learning model may not be reliably used to predict the location of the $N$--th prime number given prior knowledge of the location of the $N-1$ previous primes. 

\newpage 

\subsubsection{Corollary:}

In consequence, no prime formula may be approximated using Machine Learning. In particular, Riemann's Explicit formula for prime counting: 

\begin{equation}
\forall x \in \mathbb{N}, \pi (x)=\operatorname{R}(x)-\sum_{\rho}\operatorname{R}(x^{\rho}) \tag{6}
\end{equation}

\begin{equation}
\operatorname{R}(x)=\sum_{n=1}^\infty \frac{\mu (n)}{n}\operatorname{li}(x^{1/n}) \tag{7}
\end{equation}

is not learnable. Here $\mu (n)$ is the Möbius function, $\operatorname{li}(x)$ is the logarithmic integral function, and $\rho$ indexes every zero of the Riemann zeta function.

\newpage 

\subsection{The empirical density of the primes and their source distribution}

\textit{The Shannon Source Coding theorem informs us that the uniform source distribution is the correct generative model for all prime numbers.}

To clarify the Prime Coding Theorem in terms of random variables: 

\begin{equation}
H(X_N) \sim \pi(N) \cdot \ln N \sim N 	\tag{1}
\end{equation}

we must clarify the relation between the empirical density of primes and their source distribution of which they are a unique realisation. 

To be precise, we may model the prime encoding $\widehat{X_N}$ as the realisation of a non-stationary sequence of uniformly distributed random variables $X_N=\{x_n\}_{n=1}^N$. Thus, if the events $\widehat{x}_n \in \widehat{X_N}$ are arranged uniformly in the discrete interval $[1,N]$ then the expected information gained from observing all events in $\widehat{X_N}$ is given by: 

\begin{equation}
H(X_N) = \sum_{n=1}^N q_n \cdot H(X_{p_1},...,X_{p_{\pi(n)}}) \sim \sum_{n=1}^N q_n \cdot \ln n \sim N \tag{2}
\end{equation}

Furthermore, if we consider that each event is unique so $q_n$ is a unit fraction then we ought to model the sum of entropies: 

\begin{equation}
S_N = - \sum_{n=1}^N \ln q_n	 \tag{3}
\end{equation}

and to further our analysis we may define the Lagrangian function: 

\begin{equation}
\mathcal{L}(\lambda, q_n) = \sum_{n=1}^N q_n \cdot \ln n - \lambda \big(S_N + \sum_{n=1}^N \ln q_n \big)	\tag{4}
\end{equation}

Thus, we find that: 

\begin{equation}
\frac{\partial \mathcal{L}}{\partial q_n} = \ln n - \frac{\lambda}{q_n} = 0 \implies q_n = \frac{\lambda}{\ln n} \tag{5}	
\end{equation}

and in order to satisfy the asymptotic formula (1), we find that: 

\begin{equation}
\lambda \cdot N \sim N \implies q_n = \frac{1}{\ln n} \tag{6}	
\end{equation}

Therefore, the empirical density of primes which is described by the Prime Number Theorem:

\begin{equation}
	 \frac{\pi(N)}{N} \sim \frac{1}{\ln N} \tag{7}
\end{equation}
 
 is the natural consequence of a uniform source distribution. Moreover, as this is the simplest correct model for generating the prime numbers, Occam's razor also informs us that it is the correct generative model for all prime numbers. 
 
\newpage 

\subsection{Information-theoretic derivation of the Prime Number Theorem}

\textit{An information-theoretic derivation of the Prime Number Theorem via Occam's Razor.}

If we know nothing about the distribution of primes, in the worst case we may 
assume that each prime less than or equal to $N$ is drawn uniformly from $[1,N]$. So our source of primes is: 

\begin{equation}
X \sim U([1,N]) \tag{1}	
\end{equation}

where $H(X) = \ln N$ is the Shannon entropy of the uniform distribution. 

Now, we may define the prime encoding of $[1,N]$ as the binary sequence $\widehat{X}_N = \{\widehat{x}_n \}_{n=1}^N$ where $\widehat{x}_n = 1$ if $n$ is prime and $\widehat{x}_n = 0$ otherwise. With no prior knowledge, given that each integer is either prime or not prime, we have $2^N$ possible prime encodings in $[1,N] \subset \mathbb{N}$. As almost all binary strings of length $N$ are incompressible, the normal order of $K_U(\widehat{X}_N)$ must satisfy: 

\begin{equation}
K_U(\widehat{X}_N) \sim N \tag{2}
\end{equation}

Moreover, if there are $\pi(N)$ primes less than or equal to $N$ then the average number of bits per arrangement gives us the average amount of information gained from correctly identifying each prime in $[1,N]$ as: 

\begin{equation}
S_c = \frac{\log_2(2^N)}{\pi(N)} = \frac{N}{\pi(N)}  \sim \sum_{k = 1}^{N-1} \frac{1}{k} \lvert (k,k+1] \rvert \sim \ln N \tag{3}	
\end{equation}

as there are $k$ distinct ways to sample uniformly from $[1,k]$ and a frequency of $\frac{1}{k}$ associated with the event that $k \in \mathbb{P}$. 

In light of the last three arguments and Shannon's Noiseless Coding Theorem (a.k.a. Shannon's Source Coding Theorem), we may deduce that $K_U(\widehat{X}_N)$ must also satisfy the asymptotic relation:

\begin{equation}
K_U(\widehat{X}_N) \sim \pi(N) \cdot \ln N \sim N \tag{4}
\end{equation}

as the locations of the prime numbers are necessary and sufficient to encode $\widehat{X}_N$. 

Rearranging the last asymptotic relation, we rediscover the Prime Number Theorem: 

\begin{equation}
\frac{\pi(N)}{N} \sim \frac{1}{\ln N} \tag{5}
\end{equation}

Furthermore, this derivation of the empirical density which we may observe (5) via the source distribution which is not directly observable (1) indicates that the prime numbers are empirically distributed as if they were arranged uniformly. 

\newpage

\subsubsection{Corollary:}
 
 This generative model implies that independently of the amount of training data and computational resources at their disposal, if the best machine learning model predicts the next $N$ primes to be at $\{\hat{p}_i\}_{i=1}^N \in \mathbb{N}$ then for large $N$ this model's statistical performance will converge to a true positive rate that is no better than: 

\begin{equation}
\frac{1}{N}\sum_{i=1}^N \frac{1}{\hat{p}_i} \leq \frac{\ln N}{N} \tag{*}
\end{equation}

Hence, the true positive rate for any machine learning model converges to zero.

\newpage 

\subsection{An information-theoretic derivation of the Erd\H{o}s--Kac theorem}

\textbf{The Algorithmic Probability of a Prime Factor}

Given the integer $X \sim U([1,N])$ with random prime factorisation: 

\begin{equation}
X = \prod_{p \leq N} p^{X_p}	 \tag{1}
\end{equation}

we may define the Algorithmic Probability of the event $X_p \geq 1$ using the Prime Coding Theorem: 

\begin{equation}
\forall N \sim U([1,n]), \mathbb{E}[K_U(X_N)] \sim \pi(N) \cdot H(X_{p_1},...,X_{p_{\pi(N)}}) \sim N \tag{2}	
\end{equation}

\begin{equation}
H(X_{p_1},...,X_{p_{\pi(N)}}) = \sum_{i = 1}^{\pi(N)} H(X_{p_i}) \sim \sum_{p \leq N} \frac{1}{p} \cdot \ln p \sim \ln N \tag{3}
\end{equation}

From (2) and (3), we derive the Algorithmic Probability of the event $X_p \geq 1$ for large primes $p \in \mathbb{P}$: 

\begin{equation}
P(X_p \geq 1) \sim \frac{1}{p}	\tag{4}
\end{equation}

Likewise, the event that any two distinct primes $p,q \in \mathbb{P}$ are simultaneously observed occurs with Algorithmic Probability: 

\begin{equation}
P(X_p \geq 1 \land X_q \geq 1) \sim \frac{1}{p} \cdot \frac{1}{q}	\tag{5}
\end{equation}

as formulas (2) and (3) tell us that any two prime numbers $p,q \in \mathbb{P}$ are statistically independent. 

\textbf{The Expected number of Unique Prime Divisors}

For any integer $X \sim U([1,N])$, we may define its number of Unique Prime Divisors $w(X) = \sum_{p \leq N} X_p$ where $X_p = 1$ if $X \bmod p = 0$
and $X_p = 0$ otherwise. Thus, we may calculate the Expectation: 

\begin{equation}
\forall X \sim U([1,N]), \mathbb{E}[w(X)] = \sum_{p \leq N} 1 \cdot P(X_p \geq 1) + 0\cdot \big(1-P(X_p \geq 1)\big) \sim \sum_{p \leq N} \frac{1}{p} \sim \ln \ln N	\tag{6}
\end{equation}

where we used Mertens' Second theorem $\sum_{p \leq N} \frac{1}{p} \sim \ln \ln N$. 

\textbf{The Standard Deviation of $w(X)$}

As the random variables $X_p$ are independent, the variance of $w(X)$ is linear in $X_p$: 

\begin{equation}
\forall X \sim U([1,N]), \textrm{Var}[w(X)]	= \sum_{p \leq N} \mathbb{E}[X_p^2] - \mathbb{E}[X_p]^2 \sim \sum_{p \leq N} \big(\frac{1}{p} - \frac{1}{p^2} \big) \sim \ln \ln N \tag{7}
\end{equation}

since $\sum_{p \leq N} \frac{1}{p^2} \leq \frac{\pi^2}{6}$.

\newpage 

\textbf{The Erd\H{o}s--Kac theorem}

In order to prove the Erd\H{o}s--Kac theorem, it remains to show that $\omega(X) = \sum_{p \leq N} X_p$ satisfies the Lindeberg condition for the Central Limit Theorem:

\begin{equation}
\Lambda_N(\epsilon) = \sum_{p \leq N} \Big\langle \Big(\frac{X_p}{\sqrt{\text{Var}[\omega(X)]}}\Big)^2: \Big\lvert \frac{X_p}{\sqrt{\text{Var}[\omega(X)]}} \Big\rvert \geq \epsilon \Big\rangle \tag{8}
\end{equation}

\begin{equation}
\forall \epsilon > 0, \lim_{N \to \infty} \Lambda_N(\epsilon) = 0 \tag{9} 
\end{equation}

where $\langle \alpha:\beta \rangle$ denotes the expectation value of $\alpha$ restricted to outcomes $\beta$.

\textbf{Proof:}

Given our analysis of the Algorithmic Probability of a prime factor: 

\begin{equation}
P(X_p \geq 1) \sim \frac{1}{p} \tag{10}	
\end{equation}

where without normalising we may observe that: 

\begin{equation}
\sum_{p \leq N} P(X_p \geq 1 \lor X_p = 0) = \pi(N) \tag{11}	
\end{equation}

as we are a priori guaranteed $\pi(N)$ bits of information from determining the distinct prime factors of $X$. Normalising, so the frequency distribution is dimensionless: 

\begin{equation}
 f(X_p \geq 1) \sim \frac{1}{\pi(N)} \cdot \frac{1}{p} \tag{12}
\end{equation}

Given this normalised distribution, we may evaluate the expectation: 

\begin{equation}
\Sigma_N := \sum_{p \leq N} \Big\langle \Big(\frac{X_p}{\sqrt{\text{Var}[\omega(X)]}}\Big)^2: \Big\lvert \frac{X_p}{\sqrt{\text{Var}[\omega(X)]}} \Big\rvert \geq 0 \Big\rangle \tag{13}
\end{equation}

which is greater than or equal in value to $\Lambda_N(\epsilon)$ for any $\epsilon > 0$ since $\lim_{\epsilon \rightarrow 0} \Lambda_N(\epsilon) = \Sigma_N$. It follows that for large $N$ this expression simplifies to: 

\begin{equation}
\Sigma_N \sim \frac{1}{\ln \ln N} \sum_{p \leq N} f(X_p \geq 1) \sim 	\frac{1}{\pi(N) \cdot \ln \ln N} \sum_{p \leq N} \frac{1}{p} \sim \frac{1}{\pi(N)} \tag{14}
\end{equation}

and therefore the Lindeberg criterion is satisfied for any $\epsilon > 0$: 

\begin{equation}
0 \leq \lim_{N \to \infty} \Lambda_N(\epsilon) \leq \lim_{N \to \infty} \Sigma_N \sim \lim_{N \to \infty} \frac{1}{\pi(N)} = 0 \tag{15}
\end{equation}

Thus, we may conclude that: 

\begin{equation}
\forall X \sim U([1,N]), \frac{\omega(X)- \ln \ln N}{\sqrt{\ln \ln N}} \tag{16} 
\end{equation}

converges to the standard normal distribution $\mathcal{N}(0,1)$ as $N \rightarrow \infty$.

\newpage

\textbf{Discussion:}

This theorem is of great interest to the broader mathematical community as it 
is impossible to guess from empirical observations. In fact, it is far from certain that Erd\H{o}s and Kac would have proved the Erd\H{o}s--Kac theorem if its precursor, the Hardy--Ramanujan theorem, was not first discovered. 

More generally, at a time of Big Data and the imminent supremacy of AI, this 
theorem forces the issue of determining how some scientists were able to formulate correct theories based on zero empirical evidence. While the 
Erd\H{o}s-Kac theorem has the form of a statistical observation, the normal order 
of $w(X)$ only begins to emerge for $X \sim U([1,N])$ where $N \geq 10^{100}$. 

Within the current scientific paradigm, non-trivial scientific discoveries of this kind that are provably beyond the scope of scientific induction(and hence machine learning) do not yet have an adequate explanation. 

\newpage

\subsection{The Hardy--Ramanujan theorem}

The Hardy--Ramanujan theorem states that, given any $\epsilon > 0$, almost all integers satisfy: 

\begin{equation}
\forall n \in \mathbb{N}, |\omega(n)-\ln \ln n | < \epsilon \cdot \ln \ln n \tag{*}	
\end{equation}

It follows that $\omega(n)$, which measures the number of distinct prime factors of $n$, has normal order $\ln \ln n$. 

\textbf{Proof:}

Given the random variable $X \sim U([1,N])$ with particular realisation $\widehat{X} \in [1,N]$, we may encode the distinct prime factors of $\widehat{X}$ using the prime encoding:

\begin{equation}
\phi(\widehat{X}) := \{\hat{x_i}\}_{i=1}^{\pi(N)} \in \{0,1\}^{\pi(N)} \tag{1}
\end{equation}

On the other hand, $\omega(\widehat{X})$ measures the information gained from observing the distinct prime factors of $\widehat{X}$ since: 

\begin{equation}
\omega(\widehat{X}) = \sum_{i=1}^{\pi(N)} \hat{x_i} \tag{2}	
\end{equation}

This means that exactly $\omega(\widehat{X})$ binary questions asked in sequential order are necessary and sufficient to identify $\phi(\widehat{X})$. Hence, using Kolmogorov's Invariance theorem: 

\begin{equation}
K_U(\phi(\widehat{X})) \sim \omega(\widehat{X}) \tag{3}	
\end{equation}

As each prime factor of $\widehat{X}$ contributes exactly one bit of information, the Expected Kolmogorov Complexity of $\phi(X)$ is asymptotically: 

\begin{equation}
\mathbb{E}[K_U(\phi(X))] = \sum_{p \leq N} P(X_p \geq 1) \sim \sum_{p \leq N} \frac{1}{p} \sim \ln \ln N \tag{4}	
\end{equation}

Moreover, we may note that if $P(\phi(\widehat{X_i}))$ denotes the Algorithmic Probability of the event $\phi(X_i = \widehat{X_i})$ where $X_i \sim U([1,N])$, 

\begin{equation}
\frac{1}{n} \sum_{i=1}^n -\log_2 P(\phi(X_i = \widehat{X_i})) \rightarrow \mathbb{E}[K_U(\phi(X))] + \mathcal{O}(1) \tag{5}	
\end{equation}

where (5) follows from Levin's Coding theorem: 

\begin{equation}
-\log_2 P(\phi(\widehat{X_i})) \sim K_U(\phi(\widehat{X_i})) \tag{6}
\end{equation}

Furthermore, given the asymptotic relation: 

\begin{equation}
	\mathbb{E}[K_U(\phi(X))] \sim H(\phi(X)) \sim \ln \ln N \tag{7}
\end{equation}

we may apply the Asymptotic Equipartition Theorem. 

\newpage 

In particular, we may observe that for large $N \in \mathbb{N}$ each $\phi(X_i)$ is sampled i.i.d. from the finite set of possible prime encodings $\Omega = \{0,1\}^{\pi(N)}$ and the typical set satisfies: 

\begin{equation}
P(\{\phi(\widehat{X_i})\}_{i=1}^n \in A_{\epsilon}^n) \geq 1- \epsilon	\tag{8}
\end{equation}

\begin{equation}
	\lvert A_{\epsilon}^n \rvert \sim 2^{n \cdot \ln \ln N} \ll 2^{n \cdot \pi(N)} \tag{9}
\end{equation}

As an immediate consequence, we may apply the Asymptotic Equipartition Property(AEP) to a typical prime encoding $\{\hat{x_i}\}_{i=1}^{\pi(N)} \in B_{\epsilon}^{\pi(N)}$ which yields: 

\begin{equation}
P(\{\hat{x_i}\}_{i=1}^{\pi(N)} \in B_{\epsilon}^{\pi(N)}) \geq 1 - \epsilon \tag{10}
\end{equation}

\begin{equation}
	\lvert A_{\epsilon}^n \rvert \sim \lvert B_{\epsilon}^{\pi(N)} \rvert^n \implies \lvert B_{\epsilon}^{\pi(N)} \rvert \sim 2^{\ln \ln N} \tag{11}
\end{equation}

where $\epsilon \to 0$ as $\min(n,N) \to \infty$ due to the Law of Large Numbers. 

Finally, the AEP informs us that elements of the typical set $B_{\epsilon}^{\pi(N)}$ have the same asymptotic probability. Hence, the asymptotic relation:

\begin{equation}
-\log_2 P(\phi(\widehat{X_i})) \sim K_U(\phi(\widehat{X_i})) \sim \mathbb{E}[K_U(\phi(X))] \tag{12}
\end{equation}

holds almost surely and we may conclude that as $N \to \infty$:

\begin{equation}
\forall X \sim U([1,N]), P(\omega(X) \sim \ln \ln N) = 1 \tag{13}	
\end{equation}

\newpage

\section{Proofs of Fundamental Lemmas for Kolmogorov Complexity}

\textit{Although we don't prove these results for prefix-free languages, by 
Kolmogorov's Invariance theorem they readily generalise to prefix-free Universal 
Turing Machines.}

\begin{lemma}
    There exist algorithmically random strings.
\end{lemma}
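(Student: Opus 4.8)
The plan is to prove the statement by a pigeonhole argument on description lengths, exactly the counting principle that underlies incompressibility. Fix a Turing--complete language $U$ and call a string $x \in \{0,1\}^*$ \emph{algorithmically random} (or incompressible) if $K_U(x) \geq |x|$. By Kolmogorov's Invariance Theorem this notion is robust up to an additive constant, so it is enough to exhibit such strings for one fixed $U$; nothing is lost by ignoring the $O(1)$ slack.

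First I would fix a length $n \in \N$ and count the programs that could conceivably compress a string of length $n$, namely those $p$ with $|p| < n$. There are exactly $\sum_{i=0}^{n-1} 2^i = 2^n - 1$ binary strings of length less than $n$. Since $U$ is deterministic, each program $p$ yields at most one output $U \circ p$, so the set $\{\, U \circ p : |p| < n \,\}$ has cardinality at most $2^n - 1$. Comparing this with the $2^n$ strings of length $n$, I conclude that at least one string $x_n$ with $|x_n| = n$ is not produced by any program shorter than itself, i.e. $K_U(x_n) \geq n = |x_n|$. (Conversely $K_U(x) \leq |x| + O(1)$ always, since $U$ is Turing--complete and admits a program that prints $x$ verbatim, so the bound is essentially tight.)

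The same counting gives the stronger quantitative remark that for every constant $c \geq 0$ at most $2^{n-c} - 1$ of the $2^n$ strings of length $n$ satisfy $K_U(x) < n - c$, hence a fraction at least $1 - 2^{-c}$ of them are ``$c$-incompressible''. I would record this refinement here, since it is what licenses the later claims that almost all binary strings of length $N$, and almost all integers, are incompressible.

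There is no essential obstacle: the argument is pure pigeonhole. The only points that deserve a sentence of care are (i) invoking the Invariance Theorem so that ``algorithmically random'' is machine--independent, and (ii) observing that if one insisted on the prefix--free version of the statement, the crude bound $2^n - 1$ on the number of short programs would have to be replaced by an appeal to the Kraft--McMillan inequality on the lengths of codewords of a prefix--free machine — which, as noted at the head of this section, we do not require here.
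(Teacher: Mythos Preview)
Your argument is correct and is essentially identical to the paper's own proof: both fix a length $n$, count the $2^n-1$ programs of length $<n$, and apply the pigeonhole principle to the $2^n$ strings of length $n$ to conclude that some string has no shorter description. Your added quantitative remark about $c$-incompressibility is exactly the content of the paper's next lemma (Lemma~\ref{lemma-seq}), and your caveats about invariance and the prefix--free case match the disclaimer at the head of that section.
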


\begin{proof}
    Let us suppose that for all $x, K_U(x) < n$. Then for all $x$ there exists some $p_x$ such that $U \circ p_x = x$ and $\lvert p_x \rvert < n$. Clearly, if $x \neq y$ then $p_x \neq p_y$. 
    
    Observe that there are $2^n - 1$ programs of length less than $n$, while the number of length $n$ strings is $2^n$. By the pigeonhole principle, if all strings of length $n$ have a program $p$ such that $|p| < n$ then there must be a program that produces two different strings. Thus, we have a contradiction. 
\end{proof}

\begin{lemma}\label{lemma-seq}
    Almost all finite strings are incompressible. 
\end{lemma}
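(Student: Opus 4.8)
The plan is to turn the pigeonhole count behind the previous lemma into a quantitative density estimate. Fix a length $n$ and a ``compression margin'' $c \geq 1$, and call a string $x \in \{0,1\}^n$ \emph{$c$-compressible} if $K_U(x) < n - c$, i.e. it admits a program strictly shorter than $n - c$. The first step is to bound the number of such strings: every $c$-compressible $x$ equals $U \circ p_x$ for some program with $|p_x| < n - c$, and distinct strings force distinct programs (exactly as in Lemma~7.1), so the number of $c$-compressible strings of length $n$ is at most the number of programs of length $< n - c$, namely $\sum_{k=0}^{n-c-1} 2^k = 2^{n-c} - 1 < 2^{n-c}$.

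Dividing by the total number $2^n$ of length-$n$ strings, the fraction of length-$n$ strings that are $c$-compressible is strictly less than $2^{-c}$; equivalently, at least a $1 - 2^{-c}$ fraction of the strings of length $n$ satisfy $K_U(x) \geq n - c$, and this bound is uniform in $n$. Letting $c \to \infty$ — for instance taking a slowly growing margin $c(n) = \lfloor \log_2 n \rfloor$ — shows that the proportion of length-$n$ strings with $K_U(x) \geq n - c(n)$ tends to $1$, so in the limit almost every finite string is incompressible, indeed incompressible up to a relative margin that vanishes. No invocation of Kolmogorov's Invariance Theorem is needed beyond the remark at the head of the section, since an additive $O(1)$ change of reference machine only perturbs the constant $c$.

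The one point I would treat carefully is the precise meaning of ``almost all finite strings'', since the set of all finite binary strings is countably infinite and admits no uniform probability measure. The cleanest reading, which the argument above delivers directly, is the length-wise density statement: for each fixed $c$, the density of $c$-compressible strings inside $\{0,1\}^n$ is at most $2^{-c}$, uniformly in $n$. If instead one prefers the density of incompressible strings inside $\{0,1\}^{\leq n} = \bigcup_{m \leq n} \{0,1\}^m$ as $n \to \infty$, I would observe that $\{0,1\}^{\leq n}$ has $2^{n+1} - 1$ elements, of which the top class $\{0,1\}^n$ already supplies more than half, so applying the estimate length by length gives the same conclusion at the cost of at most a factor $2$ in the constant. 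Either formulation is routine once the counting bound is in hand; the substantive content is entirely in the first paragraph.
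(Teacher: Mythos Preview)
Your argument is correct and matches the paper's proof exactly: both fix a margin $c$, count that there are at most $2^{n-c}-1 < 2^{n-c}$ programs of length below $n-c$, and conclude that at least a $1-2^{-c}$ fraction of length-$n$ strings satisfy $K_U(x) \geq n-c$. Your additional discussion of limits in $c$ and the precise meaning of ``almost all'' goes beyond what the paper writes but is consistent with it and does not alter the approach.
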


\begin{proof}
    Let $c$ be some non--negative constant. The number of programs of length less than $n-c$ is $2^{n-c} - 1 < 2^{n-c}$, 
    which leaves us with $2^n-2^{n-c} = 2^n(1-2^{-c})$ programs of length greater than or equal to $n-c$. 
\end{proof}

\begin{corollary}\label{cor-integers}
    Almost all integers are algorithmically random. 
\end{corollary}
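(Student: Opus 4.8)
The plan is to transfer Lemma~\ref{lemma-seq} from strings to integers through the canonical length--lexicographic bijection $\sigma \colon \N \to \{0,1\}^*$, under which the integer $n$ is sent to a string of length $\ell(n) = \lfloor \log_2(n+1) \rfloor$, so that $\ell(n) \sim \log_2 n$; accordingly we set $K_U(n) := K_U(\sigma(n))$. The only real content is a counting argument that upgrades the statement ``a $2^{-c}$ fraction of the strings of a \emph{fixed} length are compressible'' to a statement about the natural density of compressible integers in an initial segment $[1,N]$.

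First I would fix a constant $c \geq 1$ and call an integer $n$ \emph{$c$--compressible} if $K_U(n) < \ell(n) - c$. For each length $m$, the proof of Lemma~\ref{lemma-seq} shows that the number of strings $x$ of length $m$ with $K_U(x) < m - c$ is at most $2^{m-c} - 1 < 2^{m-c}$, since every such string needs a distinct program of length $< m - c$ and there are only $2^{m-c}-1$ of those.

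Next I would sum over lengths. Given $N$, every integer in $[1,N]$ has bit--length at most $M := \lfloor \log_2 N \rfloor + 1$, so the number of $c$--compressible integers in $[1,N]$ is at most $\sum_{m=1}^{M} 2^{m-c} < 2^{M+1-c} \leq 4 \cdot 2^{-c} N$. Dividing by $N$, the density of $c$--compressible integers in $[1,N]$ is at most $4 \cdot 2^{-c}$, \emph{uniformly in $N$}; equivalently, the lower density of the set $R_c := \{\, n : K_U(n) \geq \ell(n) - c \,\}$ is at least $1 - 4 \cdot 2^{-c}$. On $R_c$ one has $K_U(n) \geq \ell(n) - c \sim \log_2 n$, because the fixed additive constant $c$ is negligible next to $\log_2 n$.

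Finally I would let $c \to \infty$: since $c$ is arbitrary, for every $\varepsilon > 0$ there is a $c$ with $4 \cdot 2^{-c} < \varepsilon$, so a set of integers of lower density $> 1 - \varepsilon$ is $c$--incompressible and in particular satisfies $K_U(n) \sim \log_2 n$. Hence almost all integers are algorithmically random, which is exactly what is invoked (through $\E[K_U(Z)] \sim \E[\log_2 Z]$) in the number--theoretic applications above. I expect the main obstacle to be purely bookkeeping: keeping the additive constant $c$, the choice of bijection, and the passage from ``each length $m$'' to ``the whole interval $[1,N]$'' mutually consistent, and being explicit that ``algorithmically random'' is meant here in the density sense $K_U(n) = \log_2 n - o(\log_2 n)$ rather than relative to a single universal constant.
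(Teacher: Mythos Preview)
Your argument is correct and follows essentially the same route as the paper: transfer Lemma~\ref{lemma-seq} from $\{0,1\}^*$ to $\mathbb{N}$ via a bijection, so that ``almost all strings are incompressible'' becomes ``for almost all $n$, $K_U(n)\sim\log_2 n$.'' The paper's proof is a one--line appeal to that bijection, whereas you carry out explicitly the density bookkeeping (summing the per--length bound $2^{m-c}$ over $m\le M$ and letting $c\to\infty$) that the paper leaves implicit.
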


\begin{proof}
    The set of finite strings is countable so there is a bijective map from $\{0,1\}^*$ to $\mathbb{N}$ and we may define the Kolmogorov Complexity as a map from integers to integers, $K_U: \mathbb{N} \rightarrow \mathbb{N}$. As almost all finite strings are incompressible, it follows that almost all integers are algorithmically random: for almost all integers $n \in \mathbb{N}, K_U(n) \sim \log_2 n$.
\end{proof}

\begin{lemma}
    Kolmogorov Complexity is not computable.
\end{lemma}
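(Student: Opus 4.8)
The plan is to argue by contradiction with a Berry-paradox / self-reference argument, exactly parallel to the undecidability of the halting problem. Suppose, for contradiction, that $K_U \colon \{0,1\}^* \to \N$ is a \emph{total computable} function. The crucial input is Lemma~\ref{lemma-seq} (together with the first lemma of this section): for every $n$ there is at least one string with Kolmogorov complexity $\geq n$ — indeed incompressible strings of every length exist. Hence the following procedure is well defined: on input $n$ (given in binary), enumerate the finite strings $x \in \{0,1\}^*$ in length-lexicographic order, compute $K_U(x)$ for each one (legitimate under the computability assumption), and halt outputting the first $x$ with $K_U(x) \geq n$. Call it $x_n$.

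Let $q_n$ be the program implementing this procedure. It consists of a fixed block of code of some constant size $c$, independent of $n$, together with a description of the integer $n$, costing $\lceil \log_2 n \rceil + O(1)$ bits, so $|q_n| \leq \log_2 n + O(1)$. By construction $U \circ q_n = x_n$ with $K_U(x_n) \geq n$, so since $K_U(x_n)$ is the \emph{minimal} description length,
\begin{equation}
n \leq K_U(x_n) \leq |q_n| \leq \log_2 n + O(1),
\end{equation}
which is false for all sufficiently large $n$. This contradiction rules out any total computable $K_U$, and by Kolmogorov's Invariance Theorem the $O(1)$ terms absorb the choice of reference machine $U$, so the conclusion is machine-independent.

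It is worth recording what survives: $K_U$ is upper semi-computable (limit-computable from above) — dovetail all programs $p$ through $U$ and lower the current bound on $K_U(x)$ whenever a shorter program halts with output $x$ — so $K_U$ is approximable, but no stage of the approximation ever certifies itself. This is exactly the obstruction that makes the certifying search in the argument above fail to terminate, and it is consistent with the paper's later claim that Kolmogorov complexity is ``computable on average'' relative to Levin's Universal Distribution.

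The main obstacle in writing the argument carefully is the bookkeeping of program length: one must verify that the constant $c$ coding ``enumerate strings, evaluate $K_U$, output the first with complexity $\geq n$'' genuinely does not depend on $n$, and that the self-reference is legitimate — the program reads its own input $n$ with no circular appeal to $K_U$ beyond the assumed subroutine. Everything else is routine, provided Lemma~\ref{lemma-seq} is invoked to guarantee the search halts.
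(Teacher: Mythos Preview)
Your argument is correct and is essentially the same Berry-paradox construction the paper gives: assume $K_U$ is computable, search for the first string of complexity $\geq n$, and observe that this string can then be produced by a program of length $\log_2 n + O(1)$, contradicting $K_U(x)\geq n$ for large $n$. Your write-up is in fact cleaner about the bookkeeping (fixed code plus $\lceil\log_2 n\rceil$ bits for the input) than the paper's, and the added remark on upper semi-computability is accurate and on point.
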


\begin{proof}
  Let us suppose there exists a program $q$ of length $|q| = c$ such that for any binary string $x \in \{0, 1\}^*$ we have $q(x) = K_U(x)$ for some Turing--complete language $U$. Observe that there exist algorithmically random strings $x$ of length $|x| = n$, and that these strings can be arranged in the lexicographic order. Let this set be called $X$. This implies that we are able to use $q$ to determine whether $x \in X$ is the first string of length $n$ that satisfies
  \begin{equation}
    K_U(x) \geq n \tag{*}
  \end{equation}
  for any given $n$. As the set $X$ enumerable due to the computability of $q$ and the lexicographic order of inputs to $q$, the $n$--th string $x\in X$ that satisfies (*) requires $\sim \log_2 n$ bits of information so $K_U(x) \sim \log_2 n + c$. Hence, for sufficiently large $n$ we have a contradiction.  
\end{proof}

\newpage 

\section{Proofs of Fundamental Theorems for Algorithmic Probability}

\subsection{Proof of Kolmogorov's Invariance theorem:}

\textit{The following is taken from [5].}

From the theory of compilers, it is known that for any two Turing-Complete languages $U_1$ and $U_2$, there exists a compiler $\Lambda_1$ expressed in 
$U_1$ that translates programs expressed in $U_2$ into functionally-equivalent programs expressed in $U_1$. 

It follows that if we let $p$ be the shortest program that prints a given string $x$ then: 

\begin{equation}
K_{U_1}(x) \leq |\Lambda_1| + |p| \leq K_{U_2}(x) + \mathcal{O}(1)	\tag{1}
\end{equation}

where $|\Lambda_1| = \mathcal{O}(1)$, and by symmetry we obtain the opposite inequality. 

\subsection{Proof of Levin's Universal Distribution:}

This is an immediate consequence of the Kraft-McMillan inequality. 

Kraft's inequality states that given a sequence of strings $\{x_i\}_{i=1}^n$ there exists a prefix code with codewords $\{\sigma_i\}_{i=1}^n$ where $\forall i, |\sigma_i|=k_i$ if and only if: 

\begin{equation}
\sum_{i=1}^n s^{-k_i} \leq 1	 \tag{1}
\end{equation}

where $s$ is the size of the alphabet $S$. 

Without loss of generality, let's suppose we may order the $k_i$ such that: 

\begin{equation}
k_1 \leq k_2 \leq ... \leq k_n	\tag{2}
\end{equation}

Now, there exists a prefix code if and only if at each step $j$ there is at least one codeword to choose that does not contain any of the previous $j-1$ codewords as a prefix. Due to the existence of a codeword at a previous step $i<j, s^{k_j-k_i}$ codewords are forbidden as they contain $\sigma_i$ as a prefix. It follows that in general a prefix code exists if and only if: 

\begin{equation}
\forall j \geq 2, s^{k_j} > \sum_{i=1}^{j-1} s^{k_j - k_i}	\tag{3}
\end{equation}

Dividing both sides by $s^{k_j}$, we find: 

\begin{equation}
\sum_{i=1}^n s^{-k_i} \leq 1	 \tag{4}
\end{equation}

\newpage 

\subsection{Expected Kolmogorov Complexity equals Shannon Entropy}

Let's suppose that i.i.d. data $x_{1:n} \in \mathcal{A}^{(n)}$ are generated by sampling 
from the distribution $P_X$ on the finite set $\mathcal{A}$. Then it may be demonstrated 
that Expected Kolmogorov Complexity equals Shannon Entropy up to an additive constant: 

\begin{equation}
\sum_{x_{1:n} \in \mathcal{A}^(n)} P(X^{(n)}=x_{1:n}) \cdot K_U(X^{(n)} = x_{1:n}) = H(X) + \mathcal{O}(1) \tag{1}
\end{equation}

Now, by definition: 

\begin{equation}
H(X^{(n)}) = -\sum_{x_{1:n} \in \mathcal{A}^n} P(X^{(n)}=x_{1:n}) \cdot \log_2 P(X^{(n)}=x_{1:n}) \tag{2}
\end{equation}

where $P$ is a Universal Distribution that holds for our particular Observable Universe so that: 
\begin{equation}
-\log_2 P(X^{(n)}=x_{1:n}) = K_U(X^{(n)}=x_{1:n}) -\mathcal{O}(1) \tag{3}
\end{equation}

and Unitarity is guaranteed through an Oracle that identifies it with the Universal Wave Function, as $U$ is a computer that simulates the Observable Universe. 

\textbf{Proof:} 

If we carefully consider the Asymptotic Equipartition Theorem, 

\begin{equation}
\lim_{n \to \infty} P(x_{1:n} \in \mathcal{A_{\epsilon}}^{(n)}) = P\big(\lvert \frac{1}{n} \sum_{i=1} \log_2 \frac{1}{P(X_i^{(n)})}-H(X^{(n)}) \rvert \leq \epsilon \big) \rightarrow 1 \tag{4}
\end{equation}

we may define a natural measure on the atypical set $E_n = \mathcal{A} \setminus \mathcal{A_{\epsilon}}^{(n)}$: 

\begin{equation}
\mu_n = \frac{1}{\lvert E_n \rvert} \sum_{x_{1:n} \in E_n} P(X^{(n)}=x_{1:n}) \tag{5}
\end{equation}

Thus, we may observe that $\lim_{n \to \infty} \mu_n = 0$ so there must exist a positive exponent 
$\alpha > 1$ such that as $n \to \infty$: 

\begin{equation}
\mu_n^{-1} = \mathcal{O}\big(\lvert E_n \rvert^{\alpha} \big) \tag{6}
\end{equation}

Moreover, given that $K_U(X^{(n)}=x_{1:n}) \leq \log_2 |E_n|$ for $x_{1:n} \in E_n$: 

\begin{equation}
\sum_{x_{1:n} \in E_n} P(X^{(n)}=x_{1:n}) \cdot K_U(X^{(n)}=x_{1:n}) \leq \mu_n \cdot |E_n| \cdot \log_2 |E_n| = \mathcal{O}\big(\lvert E_n \rvert^{-\alpha} \big)  \tag{7}
\end{equation}

Hence, we may deduce: 

\begin{equation}
\forall x_{1:n} \in \mathcal{A_{\epsilon}}^{(n)}, K_U(X^{(n)}=x_{1:n}) \sim \log_2 |\mathcal{A_{\epsilon}}^{(n)}| \implies \mathbb{E}[K_U(X^{(n)})] \sim \log_2 |\mathcal{A_{\epsilon}}^{(n)}| \tag{8}
\end{equation}

Finally, due to the Asymptotic Equipartition Theorem we may determine that: 

\begin{equation}
H(X^{(n)}) \sim \mathbb{E}[K_U(X^{(n)})] \sim \log_2 |\mathcal{A_{\epsilon}}^{(n)}| \tag{9}
\end{equation}

\subsection{Corollary: Levin's Coding theorem}

If we carefully consider propositions (8) and (9), we may deduce: 

\begin{equation}
\forall x_{1:n} \in \mathcal{A_{\epsilon}}^{(n)}, -\log_2 P(X^{(n)}=x_{1:n}) \sim \log_2 |\mathcal{A_{\epsilon}}^{(n)}| \tag{10}
\end{equation}

\begin{equation}
\forall x_{1:n} \in \mathcal{A_{\epsilon}}^{(n)}, K_U(X^{(n)}=x_{1:n}) \sim \log_2 |\mathcal{A_{\epsilon}}^{(n)}| \tag{11}
\end{equation}

Thus, we may derive Levin's Coding theorem for i.i.d. time-series data: 

\begin{equation}
-\log_2 P(X^{(n)}=x_{1:n}) \sim K_U(X^{(n)}=x_{1:n}) \sim \log_2 |\mathcal{A_{\epsilon}}^{(n)}| \tag{12}
\end{equation}

which may be readily generalised to non-stationary data via the Asymptotic Equipartition Theorem.  

\subsection{Gödel's incompleteness theorem}

\textit{We reproduce the proof of Gödel's first incompleteness theorem in [5].}

\textbf{Definition of consistency and soundness:}

We say that a formal system(definitions, axioms, rules of inference) is \textit{consistent} if no statement which can be expressed in the system can be proved to be both true and false in the system. A formal system is \textit{sound} if only true statements can be proved to be true in the system. Hence, a sound formal system is consistent. 

\textbf{Proof:}

Let $x$ be a finite binary string of length $n$. We say that $x$ is c-random if $K_U(x)> n-c$ for some $c \in \mathbb{N}$. We recall from Lemma 2 that the fraction of sequences that may be compressed by more than $c$ bits is bounded by $2^{-c}$. 

Now, let's consider a sound formal system $F$ that is powerful enough to express the statement $x$ is c-random. Let's suppose $F$ may be described in $f$ bits. By this we mean that there is a fixed-size program of length $f$ such that, when input the number $i$, outputs a list of all valid proofs in $F$ of length $i$. We claim that, for all but finitely many random strings $x$ and $c \geq 1$, the sentence '$x$ is c-random' is not provable in $F$. 

Let's suppose the contrary. Given $F$, we may exhaustively search for a proof that a string of length $n \gg f$ is random, and print it when we find such a string $x$. This procedure, to print $x$ of length $n$ uses only $\log_2 n + f + \mathcal{O}(1)$ bits of data which is much less than $n$. However, $x$ is random due to the proof and the fact that $F$ is sound. Hence, $F$ is not consistent, which is a contradiction. 

\bibliographystyle{plain} 
\bibliography{refs}

\begin{thebibliography}{1}

\bibitem{cover-thomas}
Thomas~M. Cover and Joy~A. Thomas.
\newblock {\em Elements of Information Theory}.
\newblock Wiley, 2nd edition, 2006.

\bibitem{hutter2000theory}
Marcus Hutter.
\newblock A theory of universal intelligence based on algorithmic complexity.
\newblock {\em Arxiv}, 2000.
\newblock https://arxiv.org/abs/cs/0004001.

\bibitem{kolmogorov1983combinatorial}
A.~N. Kolmogorov.
\newblock Combinatorial foundations of information theory and the calculus of
  probabilities.
\newblock {\em Russian Math. Surveys}, 1983.

\bibitem{kontoyiannis2008counting}
Ioannis Kontoyiannis.
\newblock Some information-theoretic computations related to the distribution
  of prime numbers.
\newblock In Peter Grunwald, Petri Myllymaki, Ioana Tabus, Marcelo Weinberger,
  and Bin Yu, editors, {\em Festschrift in Honor of Jorma Rissanen}, pages
  135--143. Tampere University Press, May 2008.

\bibitem{li-vitanyi}
Ming Li and Paul Vitany{\'\i}.
\newblock {\em An Introduction to Kolmogorov Complexity and Its Applications}.
\newblock Springer, 2019.

\end{thebibliography}

\end{document}